\let\origvec\vec
\let\vec\origvec
\newtheorem{thm}{Theorem}
\newtheorem*{dfn}{Definition}
\newtheorem*{rmk}{Remark}
\newtheorem{rmk-4}{Remark}
\newtheorem{cl}{Corollary}
\begin{document}

\title{The Interpretation of Linear Prediction by Interpolation Framework and Several following Results}


\titlerunning{The Interpretation of Linear Prediction by Interpolation Framework}

\author{Changcun Huang 
}


\institute{Changcun Huang \at
}


\maketitle

\begin{abstract}
This paper gives a general interpretation of Linear Prediction ($LP$) by interpolation framework different from the perspective of statistics. This interpretation is proved to be useful by several following results, such as: The mechanism of widely used least square estimation of $LP$ coefficients can be explained more intuitively. In data modeling, $LP$ coefficients cannot distinguish signals spanned by the same interpolation bases. Two new general $LP$ constructive methods instead of least square estimation are presented with their upper bounds of approximation error and some properties given; one is based on $DCT\text{-1}$ and the other is based on difference operator. We also establish the relationship between $LP$ and Taylor series.
\keywords{Linear prediction \and Interpolation framework \and Interpolation basis \and $LP$ constructive methods \and $DCT\text{-1}$ \and Difference operator.}
\end{abstract}

\section{Introduction}
\label{intro}
Linear prediction had long been widely used in speech analysis, geophysics and neurophysics even until 1975 \cite{[1]}. Two-dimensional $LP$ is a fundamental image model \cite{[2]} contributing to many useful image processing algorithms, such as image restoration \cite{[3]}. $LP$ is also one of the most important methods of time series analysis \cite{[4]}.  It is very impressive to see the effect of $LP$'s signal representation in engineering. However, nearly all the theoretical descriptions such as \cite{[5]} of $LP$ are based on Kolmogorov \cite{[6]} or Winer \cite{[7]}'s work, which are both under the background of statistics.

This paper tries to interpret $LP$ by interpolation framework different from the perspective of statistics and is organized as follows: In Section 2, We first give the equivalent analytical form of real signals that $LP$ can represent and then introduce the interpolation framework. Section 3 explains $LP$'s approximation ability to arbitrary signals by least square estimation. Two general constructive methods are presented in Section 4. Finally, the conclusion is in Section 5.

\section{Interpolation Framework}
\label{sec:1}
\subsection{Analytical form of $LP$ represented signals}
\label{sec:2}

\begin{dfn}
We denote a sequence of data points by $f_n$ or $f(n)$, which is also referred to as ``infinite-length discrete signal'' when $n$ is infinite or ``finite-length discrete signal'' otherwise.
\end{dfn}

Textbooks about combinatorial mathematics or time series analysis such as \cite{[8],[9]} usually discuss the homogeneous linear difference equation, in which it shows that if an order $p$ $LP$ recurrence
\begin{IEEEeqnarray}{rCl}
f_n = \sum_{k = 1}^{p}a_kf_{n-k}
\end{IEEEeqnarray}
represents any real signal, the signal must be in the form
\begin{IEEEeqnarray}{rCl}
f_n = \sum_{i = 1}^{p_1}\sum_{k = 0}^{e_i - 1}(b_{i_k}n^{k}\rho_i^ncos(n\theta_i) + c_{i_k}n^{k}\rho_i^nsin(n\theta_i)),
\nonumber \\*
\end{IEEEeqnarray}
where $b_{i_k}$, $c_{i_k}$, $\rho_i$, and $\theta_i$ are real numbers, and $e_i$ is a positive integer.

The choices of parameters $\rho_i$, $\theta_i$ and $e_i$ correspond to different kinds of roots of the characteristic polynomial equation of (1). For instance, if $\theta_i \ne 0$ and $\pi/2$, $e_i = 1$, and $\rho_i = 1$, when $f_n$ is a trigonometric sum, the roots are composed of different complex numbers together with their conjugates; if $\theta_i = 0$ or $\pi/2$, $\rho_i \ne \rho_j \ne 0$ and $e_i = 1$, when $f_n$ is an exponential sum, the characteristic polynomial equation of (1) has $p$ different real roots; $f_n$ can also be a polynomial sequence if $\theta_i = 0$ or $\pi/2$, $\rho_i = 1$ and $e_i > 1$, when the roots are all repeated 1's.

In (2), once the parameters of $\rho_i$, $\theta_i$ and $e_i$ are fixed by certain roots of a characteristic polynomial equation, the corresponding $LP$ recurrence can represent all signals in the form of (2) as $b_{i_k}$ and $c_{i_k}$ arbitrarily change.

By the relationship between the roots number and the degree of a polynomial equation (equals the $LP$ order), no matter what kind of signal (2) is, the order $p$ of (1) is always equal to the number of $n^{k}\rho_i^ncos(n\theta_i)$ and $n^{k}\rho_i^nsin(n\theta_i)$ (with $n$ fixed).

In addition, the converse of above conclusion is also true: Any signal in the form of (2) can be iterated by a $LP$ recurrence. For example, to produce a trigonometric sum (a special case of (2))
\begin{IEEEeqnarray}{rCl}
f_n = \sum_{k = 1}^{p_1}(b_kcos(n\theta_k) + c_ksin(n\theta_k)),
\end{IEEEeqnarray}
the solution of (1) should be $\hat{f}_n = \sum_{k=1}^{p_1}({d_kr_k^n + d_k'\bar{r}_k^n})$,
where $r_k = cos(\theta_k) + isin(\theta_k)$ and $\bar{r}_k$ is the complex conjugate of $r_k$. We just need to let $d_k = \frac{1}{2}b_k - \frac{1}{2}c_ki$ and $d_k' = \bar{d_k}$, then one of solutions of $LP$ recurrence (1) is the given trigonometric sum. Similarly, the $LP$ recurrence of other kinds of signal in the form of (2) can be constructed by the theory of homogeneous linear difference equations.

We combine the above descriptions into a unified conclusion as follows.
\begin{thm}
The order $p$ $LP$ (1) is equivalent to (2) in representing real signals.
\end{thm}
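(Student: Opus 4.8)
The plan is to establish the two halves of the equivalence separately, each via the classical structure theory of homogeneous linear difference equations. Introduce the shift operator $E$ by $(Ef)_n = f_{n+1}$, so that recurrence (1) is equivalent to $\chi(E)f = 0$ with characteristic polynomial $\chi(z) = z^{p} - \sum_{k=1}^{p} a_k z^{p-k}$; the theorem then amounts to describing the real sequences lying in $\ker\chi(E)$.

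\emph{From (1) to (2):} since the $a_k$ are real, $\chi$ has real coefficients, so its complex roots occur in conjugate pairs; write the distinct roots as $\rho_i e^{\pm i\theta_i}$ with multiplicity $e_i$, where $\theta_i = 0$ records a (possibly negative) real root. Over $\mathbb{C}$ the kernel of $\chi(E)$ is spanned by the sequences $n^{k}\rho_i^{n}e^{\pm in\theta_i}$ for $0 \le k < e_i$. Imposing that a linear combination be real forces the coefficient of $n^{k}\rho_i^{n}e^{-in\theta_i}$ to be the conjugate of the coefficient of $n^{k}\rho_i^{n}e^{in\theta_i}$, and expanding through Euler's formula collapses each conjugate pair into a real combination $b_{i_k} n^{k}\rho_i^{n}\cos(n\theta_i) + c_{i_k} n^{k}\rho_i^{n}\sin(n\theta_i)$, which is exactly a summand of (2). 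The classification of root types discussed before the theorem (distinct complex roots, distinct real roots, repeated roots) is precisely the case enumeration to cite at this step.

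\emph{From (2) to (1):} conversely, given $f_n$ as in (2), form the monic polynomial $\chi(z) = \prod_i \bigl[(z - \rho_i e^{i\theta_i})(z - \rho_i e^{-i\theta_i})\bigr]^{e_i}$, which has real coefficients because its root multiset is conjugate-symmetric; reading $a_1, \dots, a_p$ off $\chi(z) = z^{p} - \sum_{k=1}^{p} a_k z^{p-k}$ produces a recurrence of the form (1). Each basis sequence $n^{k}\rho_i^{n}\cos(n\theta_i)$ and $n^{k}\rho_i^{n}\sin(n\theta_i)$ is the real or imaginary part of $n^{k}r^{n}$ with $r = \rho_i e^{i\theta_i}$ a root of $\chi$ of multiplicity $e_i > k$, hence lies in $\ker\chi(E)$; by linearity $f_n \in \ker\chi(E)$, i.e. $f_n$ satisfies (1). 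The trigonometric construction already worked out in the excerpt (taking $d_k = \tfrac12 b_k - \tfrac12 c_k i$) is exactly this argument in the special case $\rho_i = 1$, $e_i = 1$.

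\emph{Expected main obstacle:} the conceptual ingredients are routine; the delicate part is the bookkeeping that makes the order $p$ of (1) equal to the number of basis functions $n^{k}\rho_i^{n}\cos(n\theta_i)$, $n^{k}\rho_i^{n}\sin(n\theta_i)$ listed in (2), uniformly across the cases singled out in the text ($\theta_i = 0$, $\theta_i = \pi/2$, generic $\theta_i$; $\rho_i$ possibly negative; $e_i = 1$ versus $e_i > 1$). A real root ($\theta_i = 0$) contributes a degree-$e_i$ factor of $\chi$ and $e_i$ basis functions, the sine terms vanishing identically, whereas a genuine conjugate pair contributes a degree-$2e_i$ factor and $2e_i$ basis functions; the count must be arranged so that $\deg\chi = p$ is matched with the total basis-function count in every case. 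A secondary, minor point is that the map (2)$\to$(1) need only produce \emph{some} valid $\chi$, not a canonical one — distinct root data may span the same signal space, a non-uniqueness the paper exploits later — so no well-definedness statement is required for this direction.
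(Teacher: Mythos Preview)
Your proposal is correct and follows essentially the same route as the paper: the paper's argument for Theorem~1 is the discussion in Section~2.1 preceding the statement, which (i) invokes the standard theory of homogeneous linear difference equations to pass from (1) to (2) via the root structure of the characteristic polynomial, and (ii) handles the converse by building the recurrence from prescribed roots, illustrated on the trigonometric-sum case with exactly the coefficients $d_k = \tfrac12 b_k - \tfrac12 c_k i$ you cite. Your write-up is more self-contained and makes the shift-operator/kernel language explicit, but the underlying decomposition, the conjugate-pair bookkeeping, and the order-versus-basis-count matching are the same as the paper's.
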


\subsection{Interpolation framework}
Note that (3) is related with the common trigonometric interpolation form \cite{[11]} and the polynomial special case of (2) associates with Lagrange interpolation form, where $f_n$'s can be considered as the discrete points of a continuous function $f(t)$ which is obtained by interpolation methods. It's natural to relate (2) to a more general interpolation form different from the above two special cases, which can be written as
\begin{IEEEeqnarray}{rCl}
f(t) = \sum_{i = 1}^{p_1}\sum_{k = 0}^{e_i - 1}(b_{i_k}n^{k}\rho_i^ncos(\theta_it) + c_{i_k}n^{k}\rho_i^nsin(\theta_it)).
\nonumber \\*
\end{IEEEeqnarray}

In the context of interpolation framework, we can borrow some related concepts or thoughts familiar with us, helping to understand $LP$. For example, given some discrete points, the constructed formula (4) by interpolation methods is usually called ``interplant''. Interplant gives a continuous function; however, in this paper, we only study some discrete points of the interplant, so that somewhat a ``lower level'' concept is needed:

\begin{dfn}
In (2), the arbitrary changes of $b_{i_k}$ and $c_{i_k}$ with fixed $n^{k}\rho_i^ncos(n\theta_i)$ or $n^{k}\rho_i^nsin(n\theta_i)$ form the real solution space of a given $LP$ recurrence. We call $n^{k}\rho^ncos(n\theta)$ or $n^{k}\rho^nsin(n\theta)$ used in the form of (2) the interpolation basis, which can be written in the set form
\begin{IEEEeqnarray}{rCl}
\{{n^{k}\rho^ncos(n\theta), n^{k}\rho^nsin(n\theta)} \mid \theta, \rho \in \mathbb{R}, k \in \mathbb{Z} \text{ } and \text{ } k\ge 0\},
\nonumber \\*
\end{IEEEeqnarray}
where $k$ is a nonnegative integer, and $\theta$, $\rho$ are arbitrary real numbers, and $n$ is considered to be a constant when refers to this concept.

If $f_n$ can be expressed in the form of (2), we say that $f_n$ is spanned by its corresponding interpolation bases.
\end{dfn}

We avoid using the concept ``basis'' in function analysis \cite{[10]} since its rigorous definition is not necessarily needed.

Under the above concept, a $LP$ recurrence is equivalent to a set of interpolation bases, while the weights of the interpolation bases can be obtained from several initial values of the recurrence.

Based on this definition and Theorem 1, we can describe an important property of $LP$'s signal representation ability:
\begin{thm}
$LP$ coefficients cannot distinguish signals spanned by the same set of interpolation bases.
\end{thm}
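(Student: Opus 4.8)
The plan is to show that a signal's set of interpolation bases already determines its $LP$ recurrence uniquely, so that any two signals sharing the same bases are forced to share the same coefficients $a_1,\dots,a_p$; hence those coefficients carry no information that separates them. First I would spell out the dictionary supplied by Theorem 1: an order $p$ recurrence (1) is governed by its characteristic polynomial $\chi(x)=x^{p}-a_1x^{p-1}-\cdots-a_p$, and the interpolation bases $n^{k}\rho_i^{n}\cos(n\theta_i)$ and $n^{k}\rho_i^{n}\sin(n\theta_i)$ occurring in (2) are precisely those generated by the roots $r_i=\rho_i(\cos\theta_i\pm i\sin\theta_i)$ of $\chi$, with the exponent range $0\le k\le e_i-1$ fixed by the multiplicity $e_i$ of $r_i$. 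Conversely, a prescribed set of interpolation bases (together with their exponent ranges) pins down the multiset of roots with multiplicities, hence the monic polynomial $\chi(x)=\prod_j(x-r_j)$, hence the coefficient vector $(a_1,\dots,a_p)$ obtained by expanding the product. So ``set of interpolation bases'' and ``coefficient vector of an order $p$ $LP$'' are two encodings of the same object.

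Next, suppose $f_n$ and $g_n$ are both spanned by this common basis set, i.e. each is a finite linear combination — with its own arbitrary real weights $b_{i_k},c_{i_k}$ — of the functions in the set. Every single basis function $n^{k}\rho_i^{n}\cos(n\theta_i)$ or $n^{k}\rho_i^{n}\sin(n\theta_i)$ lies in the solution space of the homogeneous linear difference equation with characteristic polynomial $\chi$, that is, it satisfies $h_n=\sum_{k=1}^{p}a_kh_{n-k}$; this is exactly the standard fact already invoked in the passage from (1) to (2) before Theorem 1. Because that recurrence is linear, any linear combination of such functions satisfies it as well. Consequently both $f_n$ and $g_n$ obey $f_n=\sum_{k=1}^{p}a_kf_{n-k}$ with the very same $a_k$, so no inspection of the $LP$ coefficients can tell them apart — only the weights, equivalently the chosen initial values, differ.

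The only point that needs care is the uniqueness half of the dictionary: one must verify that two distinct coefficient vectors cannot produce the same basis set. This is immediate, since a monic polynomial of degree $p$ is determined by its multiset of roots, and the order $p$ equals the total number of basis functions by the root-count argument recalled just after (2) in the excerpt; thus there is no room for ambiguity. I expect this bookkeeping — matching each exponent range $e_i$ to a root multiplicity and confirming that the count of bases equals $p$ — to be the main, though modest, obstacle; everything else reduces to the linearity of the recurrence (1).
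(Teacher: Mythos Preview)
Your proposal is correct and follows essentially the same approach as the paper: both rely on the correspondence, set up in Theorem~1, between a coefficient vector $(a_1,\dots,a_p)$ and a set of interpolation bases, and then observe that every signal in the span of those bases satisfies the same recurrence. The paper's proof compresses this into two sentences, while you spell out the dictionary via the characteristic polynomial and add the uniqueness check explicitly; the substance is the same.
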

\begin{proof}
A set of $LP$ coefficients yields a $LP$ recurrence with respect to a set of interpolation bases. A certain set of $LP$ coefficients can represent all the signals spanned by the corresponding set of interpolation bases without the ability to distinguish them.
\end{proof}

\begin{rmk}
$LP$ coefficients related features such as $LPC$ \cite{[12]} and $PLP$ \cite{[13]} are successfully applied in speech analysis, where $LP$ is considered as a basic model or a parameterization method of speech signals for pattern recognition. However, by Theorem 2, $LP$ method may be insufficient sometimes when different segments of the speech signal have the same interpolation bases, which may need to be improved.
\end{rmk}

\subsection{Summary}
We first gave the analytical form of signals iterated by $LP$, and then the interpolation framework was introduced. A useful conclusion about $LP$ coefficients widely used in engineering was discussed in Theorem 2. The left parts of this paper are all under this framework.

\section{Interpreting least square estimation of $LP$ by interpolation framework}
We know that any finite-length discrete signal can be approximated by $LP$ via least square estimation of $LP$ coefficients, which is widely used in engineering \cite{[1]}. The results of Section 2 cannot explain this phenomenon because those signals mentioned above are constrained to certain types. Under interpolation framework, we can give an interpretation here.

\subsection{Automatic selection of interpolation bases}
\begin{thm}
\label{Theorem 2}
For any given finite-length discrete signal $f_n$, the least square estimation of order $p$ $LP$ is equivalent to selecting the best $p$ interpolation bases from the set of (5) to approximate $f_n$ by minimum error. The best selection is unique once the order $p$ is fixed.
\end{thm}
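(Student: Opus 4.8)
The plan is to exploit the dictionary, already implicit in Theorem 1 and the accompanying discussion of homogeneous linear difference equations, between an order $p$ $LP$ recurrence and a set of $p$ interpolation bases. A coefficient vector $(a_1,\dots,a_p)$ determines the characteristic polynomial $z^{p}-\sum_{k=1}^{p}a_k z^{p-k}$, and its $p$ roots (counted with multiplicity) produce exactly $p$ elements of the family (5): a root written as $r=\rho(\cos\theta+i\sin\theta)$, paired with its conjugate, contributes $\rho^{n}\cos(n\theta)$ and $\rho^{n}\sin(n\theta)$, while a root of multiplicity $e$ additionally brings in the factors $n^{k}$ for $k=0,\dots,e-1$. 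Thus ``choosing order $p$ $LP$ coefficients'' is literally the same data as ``choosing an admissible $p$-element configuration of interpolation bases from (5),'' and the solution space $V(a_1,\dots,a_p)$ of recurrence (1) is precisely the span of those $p$ bases.

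Next I would make precise the sense in which least square estimation \emph{selects} bases. Least square estimation minimizes the prediction error $E(a_1,\dots,a_p)=\sum_{n}\bigl(f_n-\sum_{k=1}^{p}a_k f_{n-k}\bigr)^{2}$ over the coefficient vector. The key remark is that $E(a_1,\dots,a_p)=0$ if and only if $f_n$ satisfies that recurrence, i.e.\ if and only if $f_n$ lies in $V(a_1,\dots,a_p)$, equivalently $f_n$ is spanned by the associated $p$ interpolation bases. Hence $E$ quantifies how far $f_n$ is from lying in the span of the chosen $p$ bases, and minimizing $E$ over $(a_1,\dots,a_p)$ is the same as minimizing, over all admissible $p$-element subsets of (5), the residual of $f_n$ against their span. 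That is exactly ``selecting the best $p$ interpolation bases from (5) to approximate $f_n$ by minimum error.''

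For uniqueness I would write down the normal equations for $\min_a E(a)$; they take the form $R\,a=r$ with $R$ the symmetric (Toeplitz-type) autocorrelation matrix of the windowed data and $r$ the corresponding autocorrelation vector. Whenever $R$ is nonsingular — the generic situation, failing only when $f_n$ already exactly obeys a recurrence of order smaller than $p$ — the minimizer $a^{\star}=(a_1^{\star},\dots,a_p^{\star})$ is unique. A unique $a^{\star}$ gives a unique monic characteristic polynomial, hence a unique multiset of $p$ roots, hence, by the correspondence of the first paragraph, a unique set of $p$ interpolation bases. This yields the asserted uniqueness once $p$ is fixed.

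The step I expect to be the main obstacle is the identification in the second paragraph: honestly equating the prediction-error functional $E(a)$ with the genuine approximation error $\inf_{g\in V(a)}\sum_n (f_n-g_n)^{2}$ and then minimizing over $a$. These are a priori different functionals — one penalizes the recurrence residual, the other the best $\ell^{2}$ fit inside the solution space — and they coincide cleanly only at the exact (zero-error) configurations. I would handle this either by adopting the prediction-error reading of ``minimum error'' that is standard in the $LP$ literature, or by arguing that the two quantities are monotonically linked so that their minimizers agree. In addition, the degenerate cases — a singular $R$, or a least-square polynomial with a root at the origin (for which $\rho=0$ makes the corresponding ``basis'' ill-defined) — would need to be flagged and either excluded by a genericity hypothesis on $f_n$ or treated separately.
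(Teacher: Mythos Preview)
Your approach matches the paper's: both argue that least squares yields a unique coefficient vector $a^{\star}$, which via the characteristic polynomial determines a unique set of $p$ interpolation bases, so that minimizing over $(a_1,\dots,a_p)$ amounts to selecting the best $p$ bases from (5). The paper's own proof is in fact considerably terser than yours and does not engage with the subtleties you correctly flag (prediction error versus genuine $\ell^{2}$ approximation error in the span, singular $R$, roots at the origin); it simply asserts uniqueness of the least-squares solution and invokes the coefficient-to-basis dictionary without further justification.
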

\begin{proof}
Least square estimation provides a unique optimal solution of $LP$ coefficients $a_k$'s as in (1). A unique solution of $a_k$'s determines a unique set of interpolation bases belonging to (5) by solving a difference equation specified by $a_k$'s. Different sets of $LP$ coefficients correspond to different sets of interpolation bases, respectively. The $LP$ order $p$ is the degree of characteristic polynomial equations, so that there will be $p$ interpolation bases.
\end{proof}

\begin{rmk}
Advantages of least square estimation: The most interesting and powerful effect of $LP$'s approximation by least square method is that it can select the best interpolation bases automatically from a wide range bases of (5) via arbitrary changes of parameters $\theta$,  $\rho$ and $k$. The usual methods of interpolation or approximation generally fix the interpolation bases first and the performance is constrained to those fixed bases, such as Lagrange interpolation.
\end{rmk}

\subsection{Interpretations of $LP$ order adjusting methods}
In least square estimation of $LP$ coefficients, the $LP$ order $p$ must be manually determined first. Various methods were developed to choose the best $LP$ order, nearly all of which are based on the fact that increasing the order $p$ within certain range leads to less approximation error \cite{[1],[4]}. The principle underlying this fact can be explained by Theorem 3.

According to Theorem 3, higher $LP$ order results in more interpolation bases, which could improve the approximation performance if data points were not modeled well by less interpolation bases. This is similar to polynomial fitting. The simplest data structure is linear, so degree one polynomial is enough; when the data structure is nonlinear, we must increase the polynomial degree to fit that. The more ``complex'' the data, the more interpolation bases are needed.

\section{Two general $LP$ constructive methods}
Although least square method can automatically select the interpolation bases for arbitrary finite-length discrete signals, it is done implicitly by $LP$ coefficients. Based on the results of Section 2, it's natural to choose the interpolation bases directly to construct $LP$. In what follows, two new general constructive methods will be presented.

There's theoretical significance of the two following methods. They can establish relationships between $LP$ and some other branches of signal processing or mathematics. The ``dense'' property of constructed $LP$ in the whole set of $LP$ may lead to standard methods to study all $LP$, just like using Taylor series to study all kinds of smooth functions.

They also have potential engineering applications. For example, one of the results will tell us that adjusting the $LP$ order is not the only way to improve $LP$'s approximation, while increasing the sampling frequency of data points is also beneficial.

\subsection{$DCT$ method}
In trigonometric sum (3), if the coefficient $c_k$ of $sin(n\theta_k)$ is zero, (3) reduces to a sequence of cosine sum, which is associated with the discrete cosine transform ($DCT$) \cite{[14]}. Because $DCT$ can fit any finite-length discrete signal, a corresponding general $LP$ constructive method follows.

\begin{thm}
\label{Theorem 3}
Any sequence of $f(n)$ for $n = 0, 1, \cdots, N - 1$ can be approximated or parameterized by $LP$ constructed through $DCT\text{-1}$. In constructing process, selecting the $p$ interpolation bases with respect to $p$ largest absolute-value weights gives the minimum upper bound of approximation error than other selections.
\end{thm}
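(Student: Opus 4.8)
The plan is to read the LP directly off the DCT-1 coefficients of $f$ and then reduce the optimality claim to a one-line rearrangement argument. First I would use the invertibility of the DCT-1: for a sequence $f(n)$, $n = 0, 1, \ldots, N-1$, there are uniquely determined real numbers $w_0, w_1, \ldots, w_{N-1}$ (the DCT-1 coefficients, up to the usual endpoint normalization) with
\[
f(n) \;=\; \sum_{k=0}^{N-1} w_k \cos\!\Bigl(\tfrac{\pi n k}{N-1}\Bigr), \qquad n = 0, 1, \ldots, N-1 .
\]
Writing $\theta_k = \pi k/(N-1)$, each $\cos(n\theta_k)$ is an interpolation basis in the sense of the Definition (take $\rho = 1$, exponent zero, and sine weight zero), so $f(n)$ is spanned by the finite family $\{\cos(n\theta_k)\}_{k=0}^{N-1}$. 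By Theorem 1 together with the converse construction through homogeneous linear difference equations indicated before it, this family (augmented for each $\theta_k \notin \{0,\pi\}$ by its conjugate partner $\sin(n\theta_k)$, which is forced in because real LP coefficients have conjugate-closed characteristic roots) determines an LP recurrence whose solution, with the initial values taken from $f$, reproduces $f(n)$ on $n = 0, \ldots, N-1$. This is the ``parameterized'' assertion.

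For the ``approximated'' assertion, fix the target order $p$ and let $S \subseteq \{0,1,\ldots,N-1\}$ index the cosine bases that are retained, so that the LP built from $\{\cos(n\theta_k) : k \in S\}$ (and their conjugate partners) has order $p$. Taking the initial values from $\hat f(n) := \sum_{k \in S} w_k \cos(n\theta_k)$, the constructed LP outputs exactly $\hat f(n)$, and because the $\cos(n\theta_k)$ are precisely the DCT-1 basis vectors the residual on the sampling interval is the discarded DCT-1 tail,
\[
f(n) - \hat f(n) \;=\; \sum_{k \notin S} w_k \cos(n\theta_k), \qquad n = 0, 1, \ldots, N-1 .
\]
Using $|\cos(n\theta_k)| \le 1$ gives the uniform bound $|f(n) - \hat f(n)| \le \sum_{k \notin S} |w_k| =: E(S)$. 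Now $E(S)$ is the total absolute weight of the discarded bases, and among all admissible $S$ of the same size it is smallest exactly when $S$ collects the indices of the $p$ largest values of $|w_k|$: if a discarded index carried more weight than a retained one, swapping them would not increase $E(S)$. Hence the greedy ``$p$ largest absolute-value weights'' rule attains the minimum of this upper bound over all selections, which is the second assertion.

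The routine ingredients here — invertibility of the DCT-1, the estimate $|\cos|\le 1$, and the rearrangement step — are immediate. The point that needs care, and which I expect to be the main obstacle, is the accounting between ``number of retained cosine terms'' and ``LP order $p$'': a retained $\cos(n\theta_k)$ with $\theta_k \in (0,\pi)$ necessarily brings along the zero-weight basis $\sin(n\theta_k)$, whereas $\theta_k \in \{0,\pi\}$ contribute a single basis. I would fix this correspondence once and for all so that ``select $p$ interpolation bases'' is interpreted consistently with the Definition, and then observe that it does not affect the optimality conclusion, since $E(S)$ depends only on which weights $w_k$ are dropped, not on how the resulting order is counted. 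It is also worth recording that the bound $E(S)$ is attained (for instance at $n = 0$, where every $\cos(n\theta_k) = 1$), so ``minimum upper bound of approximation error'' is the sharpest statement obtainable from the weights alone, and no selection rule using only the magnitudes $|w_k|$ can do better.
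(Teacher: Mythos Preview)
Your proposal follows the same architecture as the paper: write $f$ in its DCT\text{-}1 expansion, retain a subset $S$ of terms, bound the residual by a function of the discarded weights, and note that this function is minimized by keeping the $p$ largest $|w_k|$. The one substantive difference is the error norm. The paper bounds the \emph{mean-square} error by $\sum_{k \notin S} b_k^2$ (via $|\cos|\le 1$ inside the square), whereas you bound the \emph{uniform} error by $\sum_{k \notin S} |w_k|$. Both upper bounds are monotone in the discarded magnitudes, so both are minimized by the same greedy rule; your route is arguably cleaner, and your explicit treatment of the sine-partner/order-counting issue is more careful than the paper, which passes over it silently.

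One small slip: your sharpness remark is not quite right. At $n=0$ one has $f(0)-\hat f(0)=\sum_{k\notin S} w_k$, not $\sum_{k\notin S}|w_k|$, so the bound $E(S)$ is attained there only when the discarded $w_k$ all share a sign. This is a side comment and does not affect the optimality argument, which only needs the upper bound and the rearrangement step.
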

\begin{proof}
There are four common forms of $DCT$, including $DCT\text{-1}$, $DCT\text{-2}$, $DCT\text{-3}$ and $DCT\text{-4}$ \cite{[14]}. Among them, $DCT\text{-1}$ is most appropriate for constructing $LP$ because we can directly make it into the form (3). $DCT\text{-1}$ theory states that for any given $N$ distinct points of $f(n)$, there exists a unique interpolation form of cosine sum such that
\begin{IEEEeqnarray}{rCl}
f(n) = b_{0} + \sum_{k = 1}^{N - 1}b_kcos(n\theta_k),
\end{IEEEeqnarray}
where $b_0$ and $b_k$'s are the $DCT\text{-1}$ of $f(n)$, and $\theta_k = \frac{\pi k}{N -1}$.

If $b_0$ and $b_k$'s are all nonzero, the number of interpolation bases in (6) is $N_I = N$ and $2(N - 1) + 1$ $LP$ coefficients are needed to iterate $f(n)$. It's meaningless to do this when we want to parameterize $f(n)$ by much less $LP$ coefficients. However, we can select some of the interpolation bases to approximate $f(n)$ instead of iterating it. After selecting probable interpolation bases, we can use them to construct $LP$.

We choose $p$ of the $N_I$ interpolation bases (may include the constant term $1$ by letting $\theta_0 = 0$) and let
\begin{IEEEeqnarray}{rCl}
\hat{f}(n) = \sum_{k=1}^{p}{b_{\sigma(k)}\cos(n\theta_{\sigma(k)})},
\end{IEEEeqnarray}	
where $\sigma(k)$'s are the indexes of selected interpolation bases and $ p < N_I$. Then the mean-square approximation error is
\begin{IEEEeqnarray}{l}
E = \frac{1}{N}\sum_{n=0}^{N - 1}{(f(n) - \hat{f}(n))^2}\nonumber \\*
\quad = \frac{1}{N}\sum_{n=0}^{N - 1}\sum_{k=p + 1}^{N_I}{(b_{\sigma{'}(k)}\cos(n\theta_{\sigma{'}(k)}) })^2 \nonumber \\*
\quad \le  \frac{1}{N}\sum_{n=0}^{N - 1}(\sum_{k=p + 1}^{N_I}{b_{\sigma{'}(k)}^2 }) = \sum_{k=p + 1}^{N_I}{b_{\sigma{'}(k)}^2 },
\end{IEEEeqnarray}
where $\sigma{'}(k)$'s are the indexes of left interpolation bases after selection. (8) indicates that the approximation error is bounded by the sum of squares of the weights with respect to the left interpolation bases. So choosing interpolation bases with largest absolute-value weights gives the minimum upper bound of approximating error.

If parts of $b_0$ and $b_k$'s are zero, just analyse the nonzero parts or directly use them to construct $LP$.

The sparsity of $DCT\text{-1}$, i.e., much of $b_k$'s and $b_0$ in (6) are zero or close to zero, has been demonstrated in the case of Markov-1 signal, by which $DCT$ is regarded as a qualified approximation to the $KLT$ optimal transform \cite{[15]}. The sparsity of $DCT$ is also verified in engineering applications and successfully used in image and speech compression \cite{[14],[15]}. It's feasible to choose several interpolation bases with larger absolute value weights to approximate $f(n)$ by certain precision due to this sparsity.

According to (8), if we select more interpolation bases, the upper bound of error also becomes small. More interpolation bases cause higher $LP$ order. This may explain some of the $LP$ order adjusting methods.
\end{proof}

\begin{rmk}
``Dense'' property of the constructed $LP$: We know that the set $\mathbb{Q}$ of all rational numbers is dense in the set $\mathbb{R}$ of all real numbers \cite{[16]}, which means that every real number can be approximated by rational numbers. It's a good way to study the real number by relatively ``simple'' rational number in terms of this dense property. We can generalize this idea to the set of all $LP$, though not necessarily the rigorous definition.

The set of $LP$ constructed by $DCT\text{-1}$ doesn't include the signals spanned by the interpolation bases other than \{$cos(n\theta)$\}. However, because the constructed $LP$ can approximate arbitrary signals by certain precision, we may consider that the set of constructed $LP$ is ``dense'' in the set of all $LP$; i.e., every $LP$ can be approximated by constructed $LP$ to some extent.
\end{rmk}

\subsection{Difference operator method}
The following result is associated with the special case of (2), i.e., the polynomial sequence. We'll use polynomial sequence to measure arbitrary finite-length discrete signal in terms of $LP$ constructed by difference operator. The relationship between $LP$ and Taylor series will be established.
\begin{thm}
Let $f(n)$ for $n = 0, 1,\cdots, N-1$ be equidistant sampled points of smooth function $f(x)$ on any closed interval. Then $f(n)$ can be approximated or parameterized by $LP$ with any precision as $N$ increases in the form of
\begin{IEEEeqnarray}{rCl}
\hat{f}_p(n) = f(n - 1) + \Delta f(n - 1) + \Delta^2f(n - 1) \>+ \nonumber \\*
 \cdots + \Delta^{p - 1}f(n - 1),\IEEEeqnarraynumspace \IEEEeqnarraynumspace \IEEEeqnarraynumspace
\end{IEEEeqnarray}
where $p$ is the $LP$ order and $\Delta^kf(n - 1)$ is the difference operator \cite{[17]}
\begin{IEEEeqnarray}{rCl}
\Delta^kf(n - 1) = \sum_{i = 0}^{k}(-1)^ic_k^if(n - 1 - i),
\end{IEEEeqnarray}
where $k$ is a nonnegative integer and $c_k^i$ is the binomial coefficient. By simple manipulation, (9) can be written as $\hat{f}_p(n) = \sum_{k = 1}^{p}a_kf(n - k)$.
\end{thm}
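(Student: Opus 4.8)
The plan is to first collapse the telescoping sum of difference operators in (9) into a single difference, and then control that residual difference by Taylor's theorem. Introduce the shift operator $E$ with $E^{-1}f(n)=f(n-1)$, so that the operator of (10) is the backward difference $\Delta = 1-E^{-1}$ and $\Delta^{k}f(n-1)=(1-E^{-1})^{k}E^{-1}f(n)$. Writing $u=1-E^{-1}$ (hence $E^{-1}=1-u$), the right-hand side of (9) becomes $\sum_{k=0}^{p-1}u^{k}(1-u)f(n)=(1-u^{p})f(n)$ by the finite geometric-series identity $\sum_{k=0}^{p-1}(1-u)u^{k}=1-u^{p}$. Translating back, $\hat f_{p}(n)=f(n)-\Delta^{p}f(n)$, i.e. the approximant is exactly $f(n)$ minus its $p$-th backward difference at $n$. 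Since $\Delta^{p}f(n)=\sum_{i=0}^{p}(-1)^{i}c_{p}^{i}f(n-i)$ carries coefficient $1$ on $f(n)$, this at once yields the stated form $\hat f_{p}(n)=\sum_{k=1}^{p}a_{k}f(n-k)$ with $a_{k}=(-1)^{k+1}c_{p}^{k}$, confirming that $\hat f_{p}$ is an order-$p$ $LP$ recurrence and hence parameterizes $f(n)$ by $p$ coefficients.

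Next I would estimate the error $f(n)-\hat f_{p}(n)=\Delta^{p}f(n)$. Write the samples as $f(n)=f(a+nh)$, where $[a,b]$ is the closed interval and $h=(b-a)/(N-1)$, and Taylor-expand each term $f(a+(n-i)h)$ about $x_{n}=a+nh$ to order $p-1$ with Lagrange remainder. Using the classical combinatorial identities $\sum_{i=0}^{p}(-1)^{i}c_{p}^{i}i^{j}=0$ for $0\le j<p$ and $\sum_{i=0}^{p}(-1)^{i}c_{p}^{i}i^{p}=(-1)^{p}p!$, all powers of $h$ of degree $<p$ cancel, leaving $\Delta^{p}f(n)=h^{p}f^{(p)}(\xi_{n})$ for some $\xi_{n}\in[a,b]$ (the mean-value form of the finite difference). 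Because $f$ is smooth on the closed interval, $M_{p}:=\max_{x\in[a,b]}|f^{(p)}(x)|<\infty$, and therefore $|f(n)-\hat f_{p}(n)|\le h^{p}M_{p}$ for every admissible index $n$ (those with $p\le n\le N-1$, for which $x_{n-p},\dots,x_{n}$ all lie in $[a,b]$).

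Finally, fix the $LP$ order $p$. Given any $\varepsilon>0$, choosing $N$ large enough that $h=(b-a)/(N-1)$ satisfies $h^{p}M_{p}<\varepsilon$ forces $|f(n)-\hat f_{p}(n)|<\varepsilon$ uniformly in $n$; moreover the number of admissible indices $N-p$ grows without bound, so refining the sampling drives the worst-case approximation error to $0$, which is the asserted ``any precision as $N$ increases''. The link to Taylor series is then transparent: (9) is the truncation of the Newton backward-difference series, the discrete analogue of a Taylor polynomial about $x_{n}$, with $h^{p}f^{(p)}(\xi_{n})$ (up to the usual normalization) playing the role of the Taylor remainder.

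I expect the main obstacle to be the error estimate of the second step: justifying the cancellation of the low-order terms through the binomial identities and producing a single bound valid simultaneously for all admissible $n$, so that increasing $N$ genuinely shrinks the uniform error. The operator collapse of the first step is routine once the geometric-series identity is noticed, and the limiting argument is immediate once the $O(h^{p})$ bound is established; the delicate point is that the hypothesis ``smooth on a closed interval'' is exactly what guarantees $M_{p}<\infty$ and hence makes the bound meaningful for each fixed $p$.
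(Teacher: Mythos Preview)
Your proposal is correct, but both of its two main steps take a different route from the paper's.

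For the collapse $\hat f_{p}(n)=f(n)-\Delta^{p}f(n)$, the paper substitutes (10) into (9) term by term, regroups by $f(n-1-i)$, and invokes the hockey-stick identity $\sum_{k=i}^{p-1}c_{k}^{i}=c_{p}^{i+1}$ to reach the explicit coefficients $a_{k}=(-1)^{k+1}c_{p}^{k}$; the identity $f(n)-\hat f_{p}(n)=\Delta^{p}f(n)$ is then read off directly. Your operator-algebra argument via $(1-u)\sum_{k=0}^{p-1}u^{k}=1-u^{p}$ with $u=1-E^{-1}$ reaches the same conclusion more compactly and without any combinatorial bookkeeping.

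For the error control, the paper does \emph{not} use Taylor's theorem. It bounds $|\Delta^{p}f(n)|$ crudely by splitting the binomial sum into even and odd indices, using $\sum_{\text{even}}c_{p}^{i}=\sum_{\text{odd}}c_{p}^{i}=:\lambda=2^{p-1}$, to obtain $|\Delta^{p}f(n)|\le\lambda\,\omega$ where $\omega$ is the local oscillation of $f$ over $p+1$ consecutive samples; it then appeals to the modulus of continuity of $f$ on the closed interval to conclude $\omega\to0$ as the sampling step shrinks. Your mean-value form $\Delta^{p}f(n)=h^{p}f^{(p)}(\xi_{n})$ gives a much sharper rate $O(h^{p})$ and makes the Taylor connection explicit (indeed it essentially absorbs the paper's subsequent Corollary), but it uses the full $C^{p}$ hypothesis, whereas the paper's argument in fact only needs uniform continuity of $f$. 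So your approach buys a quantitative rate and a cleaner link to the Taylor remainder; the paper's buys a weaker regularity requirement at the cost of a coarser bound.
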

\begin{proof}
The first part of the proof is totally combinatorial. Substituting (10) into (9) and putting $f(n - 1 -i)$ of the same $i$ with different coefficient together, we have
\begin{IEEEeqnarray}{l}
\hat{f}_p(n) = \sum_{k = 0}^{p - 1}c_k^{0}(-1)^0f(n - 1) + \sum_{k = 1}^{p - 1}c_k^{1}(-1)^1f(n - 2) \>+
\nonumber \\*
\sum_{k = 2}^{p - 1}c_k^{2}(-1)^2f(n - 3) + \cdots + \sum_{k = p - 1}^{p - 1}c_k^{p - 1}(-1)^{p - 1}f(n - p).
\nonumber \\*
\end{IEEEeqnarray}
Because $\sum_{k = 0}^{n}c_k^i= c_{n + 1}^{i + 1}$ \cite{[8]}, (11) can be further simplified into
\begin{IEEEeqnarray}{rCl}
\hat{f}_p(n) = c_p^1(-1)^0f(n - 1) + c_p^2(-1)^1f(n - 2) \>+
\nonumber \\*
c_p^3(-1)^2f(n - 3) + \cdots + c_p^p(-1)^{p - 1}f(n - p).
\end{IEEEeqnarray}
Noting that
\begin{IEEEeqnarray*}{lCl}
f(n) - \hat{f}_p(n)
\nonumber \\*
= f(n) - [c_p^1(-1)^0f(n - 1) + c_p^2(-1)^1f(n - 2) \>+
\nonumber \\* \quad \quad \quad \quad \quad \quad
\cdots + c_p^p(-1)^{p - 1}f(n - p)]
\nonumber \\*
= c_p^0(-1)^0f(n) + c_p^1(-1)^1f(n - 1) + c_p^2(-1)^2f(n - 2)
\nonumber \\* \quad \quad \quad \quad \quad \quad \quad
+\> \cdots + c_p^p(-1)^{p}f(n - p)
\nonumber \\*
= \Delta^{p}f(n),
\end{IEEEeqnarray*}	
so
\begin{IEEEeqnarray}{rCl}
f(n) = \hat{f}_p(n) + \Delta^pf(n).
\end{IEEEeqnarray}

(13) is the key of the proof, which shows that the difference between $f(n)$ and the approximation of order $p$ $LP$ constructed is the order $p$ difference of $f(n)$, i.e., $\Delta^pf(n)$. The difference operator $\Delta^{k}f(n)$ has good properties to polynomial sequences; that is, if $f(n)$ is derived from degree $k - 1$ polynomial, then $\Delta^{k}f(n)$ will be zero \cite{[17]}. The left proof is based on (13) in two cases:

Case 1: Interpolation. Of course, any sequence of $N$ data points can be interpolated by degree $N-1$ polynomial via Lagrange interpolation, when $\Delta^{N}f(n) = 0$; however, this trivial case is not our concern. What we are interested in is using fewer $LP$ coefficients to represent $N$ data points. So only the case that $f(x)$ is a degree $p - 1$ polynomial satisfying $p - 1 < N - 1$ is taken into consideration, when $\Delta^pf(n) = 0$, and therefore $f(n)$ can be iterated by order $p$ $LP$.

Cases 2: Approximation. If $\Delta^pf(n)$ is nonzero for $p - 1 < N - 1$ or $p$ is not small enough for parameterization despite being Case 1, $LP$ can be used to approximate $f(n)$. In this case, $\Delta^pf(n)$ also has some good properties if $f(x)$ is a smooth function on closed interval. The details are as follows.

The mean-square error of approximating $f(n)$ by $\hat{f}_p(n)$ for $ n = 0, 1, \cdots, N-1$ is
\begin{IEEEeqnarray}{rCl}
E = \frac{1}{N}\sum_{n = 0}^{N - 1}(f(n) - \hat{f}_p(n))^2 = \frac{1}{N}\sum_{n = 0}^{N - 1}(\Delta^pf(n))^2.
\nonumber \\*
\end{IEEEeqnarray}
We see
\begin{IEEEeqnarray}{l}
\left|\Delta^pf(n)\right| = \left|\sum_{i = 0}^p(-1)^ic_p^if(n - i)\right|
\nonumber \\* \quad \quad \quad \quad
\le \left|\sum_{\text{i is even}}^pc_p^if_M(n) - \sum_{\text{i is odd}}^pc_p^if_{m}(n)\right|,
\nonumber \\*
\end{IEEEeqnarray}
where $f_M(n)$ and $f_{m}(n)$ are respectively the maximum and minimum value of $f(n)$ within a local neighbourhood at position $n$, i.e., $f(n - i)$ for $i = 0, 1, \cdots, p$. Since $\sum_{i = 0}^p(-1)^ic_p^i = 0$ \cite{[8]}, which can be written as $\sum_{\text{i is even}}^pc_p^i = \sum_{\text{i is odd}}^pc_p^i = \lambda$,
we have
\begin{IEEEeqnarray*}{rCl}
\left| \Delta^pf(n)\right| \le \lambda(f_M(n) - f_m(n)),
\end{IEEEeqnarray*}
where $\lambda$ is a nonnegative constant determined by the order of difference operator. Letting $\omega = \max \limits_{n}\{f_M(n) - f_m(n)\}$,
then
\begin{IEEEeqnarray}{rCl}
\left| \Delta^pf(n)\right| \le \lambda\omega
\end{IEEEeqnarray}
for all $n = 0, 1, \cdots, N-1$. (14) and (16) imply
\begin{IEEEeqnarray}{rCl}
E \le \lambda^2\omega^2.
\end{IEEEeqnarray}

Now discuss (17). Denote the closed interval between $n$ and $n - p$ by $I$. Note that $\omega$ is related to the modulus of continuity of smooth function $f(x)$ on closed interval $I$ \cite{[18]}, which is
\begin{IEEEeqnarray*}{rCl}
\omega_c(\delta; f) = sup\left|f(x_1) - f(x_2)\right|,
\end{IEEEeqnarray*}
where $x_1 \in I$, $x_2 \in I$ and $\left|x_1 - x_2\right| \le \delta$. We know that $\omega_c\to0$ as $\delta\to0$ \cite{[18]}. Therefore, as the length of interval $I$ tends to zero ($\delta \to 0$ simultaneously), the limit of $\omega_c$ is also zero. It's obvious that $\omega \le \omega_c$. Thus, if the sampling frequency is sufficiently high (i.e., $N$ is large enough), the length of $I$ between $n$ and $n - p$ will be as short as possible, resulting in arbitrarily small value of $\omega$. By (17), the approximation error $E$ can also be arbitrarily small. So $LP$ with fixed order can approximate $f(n)$ by any precision as $N$ increases.

As shown in Fig.\ref{Fig.2}, the increasing of $N$ is corresponding to higher sampling frequency to continues function $f(x)$, which makes the neighborhood of $p$ data points more local.
\end{proof}

\begin{rmk-4}
Theorem 5 tells us that even if the $LP$ order is unchanged, we can improve the approximating performance by dealing with the original data such as increasing the sampling frequency.
\end{rmk-4}

\begin{rmk-4}
Obviously, the set of constructed $LP$ by difference operator also has the ``dense'' property in the set of all $LP$.
\end{rmk-4}

\begin{rmk-4}
Distinction of the constructed $LP$'s representation of different signals: Since difference operator is the same for all signals, how could different signals be classified by this kind of constructed $LP$? The local nonlinearity of data points measured by $LP$ order and the initial values of $LP$ recurrence are the distinctive features. Signals derived from different degree polynomials have different $LP$ order. If signals are from the same degree polynomial, the distinction is the set of initial values of $LP$ recurrence. Otherwise, given the same approximation error and sampling frequency, the $LP$ order or initial values will be distinct among different signals.
\end{rmk-4}

\begin{figure}[!t]
\captionsetup{justification=centering}
\centering
\subfloat[Smaller $N$ with larger neighborhood.]{\includegraphics[width=2.0in]{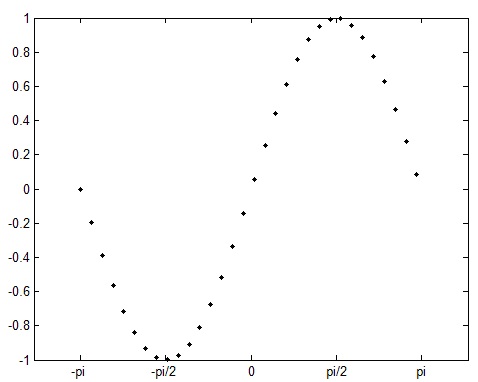}} \quad
\subfloat[Larger $N$ with smaller neighborhood.]{\includegraphics[width=2.0in]{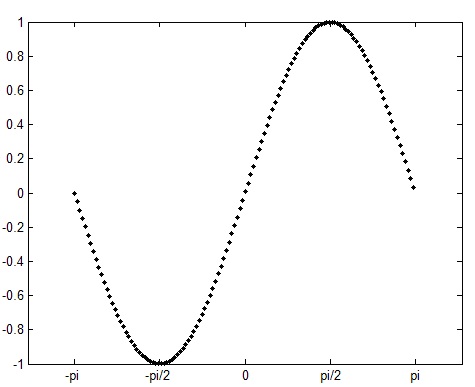}}
\caption{$N$ corresponds to the sampling frequency to continues function.}
\label{Fig.2}
\end{figure}

\subsection{Relationship to Taylor series}

Theorem 5 discussed the case when the order $p$ of $LP$ constructed by difference operator is fixed and the number $N$ of data points is varying. What's the effect if we fix $N$ and change $p$? Taylor series is needed to answer this question.
\begin{cl}
For fixed $N$, the approximation error of $LP$ constructed via difference operator to $f(n)$ is determined by the smoothness of function $f(x)$ measured by the polynomial degree of its Taylor series.
\end{cl}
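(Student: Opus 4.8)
The plan is to carry everything through the identity $f(n)=\hat f_p(n)+\Delta^p f(n)$ established in the proof of Theorem 5, which already shows that for fixed $N$ the mean-square error is
\begin{IEEEeqnarray*}{rCl}
E &=& \frac{1}{N}\sum_{n=0}^{N-1}\bigl(\Delta^p f(n)\bigr)^2 ,
\end{IEEEeqnarray*}
and then to rewrite $\Delta^p f(n)$ using the Taylor expansion of $f(x)$. First I would expand $f$ about any convenient point $a$ of the sampling interval,
\begin{IEEEeqnarray*}{rCl}
f(x) &=& \sum_{k=0}^{p-1}\frac{f^{(k)}(a)}{k!}(x-a)^k + R_p(x) ,
\end{IEEEeqnarray*}
with $R_p$ the Taylor remainder of order $p$. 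Since the sum is a polynomial of degree at most $p-1$, the difference-operator property already invoked in Theorem 5 --- $\Delta^k$ annihilates any sequence coming from a polynomial of degree $k-1$ --- makes its $\Delta^p$ vanish, so
\begin{IEEEeqnarray*}{rCl}
\Delta^p f(n) &=& \Delta^p R_p(n) .
\end{IEEEeqnarray*}
Hence the whole approximation error of the constructed $LP$ is carried solely by the remainder $R_p$, i.e.\ by the Taylor terms of $f$ of degree $\ge p$.

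Next I would split into the two regimes already distinguished in Theorem 5. If the Taylor series of $f$ is a polynomial of degree $d\le p-1$, then $R_p\equiv 0$, hence $\Delta^p f(n)=0$ for every $n$ and $E=0$ --- the exact parameterization case. If instead the Taylor expansion of $f$ carries nonzero terms of degree $\ge p$ (so either $f$ is a polynomial of degree $d\ge p$, or $f$ is not a polynomial at all and the series never terminates), then $R_p\not\equiv 0$, and I would invoke the mean value theorem for finite differences: writing $h$ for the sampling step, which is \emph{frozen} once $N$ is fixed (on an interval of length $L$ one has $h=L/(N-1)$), one gets $\Delta^p f(n)=h^{p}f^{(p)}(\xi_n)$ for some $\xi_n$ in the closed interval between the sample points $n-p$ and $n$. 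Substituting back,
\begin{IEEEeqnarray*}{rCl}
E &=& \frac{h^{2p}}{N}\sum_{n=0}^{N-1}\bigl(f^{(p)}(\xi_n)\bigr)^2 \;\le\; h^{2p}\,\bigl\|f^{(p)}\bigr\|_{\infty}^2 ,
\end{IEEEeqnarray*}
so for fixed $N$ the error is controlled entirely by $f^{(p)}$, i.e.\ by how far $f$ departs from being a degree-$(p-1)$ polynomial, which is precisely what the tail of its Taylor expansion records. Reading the two cases together: as $p$ grows with $N$ held fixed, $E$ stays positive as long as $p\le d$ and collapses to $0$ as soon as $p>d$, so the behaviour of the error is governed by the polynomial degree $d$ of the Taylor series, with threshold order $p^{\star}=d+1$.

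The delicate point is not any of these computations --- the expansion, the vanishing of $\Delta^p$ on a polynomial, and the finite-difference mean value theorem are all routine --- but rather what ``the polynomial degree of its Taylor series'' should mean when $f$ is smooth but not a polynomial: there the honest statement is that $R_p$, equivalently $f^{(p)}$, is the residual-error term, and the degree up to which the Taylor polynomial already reproduces $f$ on the interval is what dictates how small $E$ is. The second, milder obstacle is bookkeeping the step size, i.e.\ checking that fixing $N$ truly freezes $h$ so that the whole $N$-dependence drops out of the bound and only the order $p$ and the smoothness of $f$ remain; once that is settled the Corollary is immediate from Theorem 5.
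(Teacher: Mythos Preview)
Your proposal is correct and follows the same overall strategy as the paper: both arguments rest on the identity $f(n)=\hat f_p(n)+\Delta^p f(n)$ from Theorem~5, split into the polynomial and non-polynomial cases, and invoke the fact that $\Delta^p$ annihilates any sequence coming from a polynomial of degree at most $p-1$. The difference is one of rigor. In the non-polynomial case the paper stays entirely heuristic, simply asserting that if $f$ is well approximated by a degree-$k$ Taylor polynomial then $f(n)$ are ``nearly data points of degree $k$ polynomial'' so that $\Delta^{k+1}f(n)$ is ``close to zero''; you sharpen this to the explicit bound $E\le h^{2p}\|f^{(p)}\|_\infty^2$ via the mean value theorem for finite differences, which the paper does not invoke. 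Your version therefore makes precise what the paper only gestures at, and your closing remarks about what ``the polynomial degree of its Taylor series'' should mean for a smooth non-polynomial $f$ are a fair acknowledgment of the statement's informality --- the paper's proof does not address that issue either.
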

\begin{proof}
The trivial case is when $f(x)$ is a degree $k$ polynomial. The approximation error of order $p$ $LP$ is zero if $p \ge k + 1$ by (13), which is related to the Lagrange interpolation.

Otherwise, if $f(x)$ can be approximated well by a degree $k$ polynomial in terms of Taylor series on a closed interval, $f(n)$ will be nearly data points of degree $k$ polynomial such that $\Delta^{k + 1}f(n)$ is close to zero. $\Delta^{k + 1}f(n)$ is exactly the order $k + 1$ $LP$'s approximating error by (13). Thus, increasing the degree of Taylor series makes the approximation error smaller to smooth function; correspondingly, increasing of $LP$ order yields more precise approximation to $f(n)$.
\end{proof}

Corollary 1 has demonstrated that the Taylor series of $f(x)$ affects the approximation error of constructed $LP$ to $f(n)$. Furthermore, they are also similar in the approximating form.

We can see from (9) that current value of $f(n)$ is approximated by its nearest point $f(n - 1)$ adding residual from the first-order difference to higher-order difference. If $f(n)$ is a linear sequence, order 2 $LP$ is enough to iterate it. When $f(n)$ is from quadratic function, order 3 $LP$ is required. If $f(n)$ only can be approximated instead of being iterated, higher order $LP$ yields smaller approximation error. Thus, when $f(n)$ is sampled from more complex nonlinear function, we should increase the $LP$ order to iterate or approximate it.

So is Taylor series. When $f(x)$ is a polynomial, the Taylor series of $f(x)$ is exactly itself, just like $LP$'s iteration of $f(n)$. If $f(x)$ is not a polynomial, the residual of approximation of Taylor series is determined by the local smoothness of $f(x)$ measured by the polynomial degree; the higher the polynomial degree, the smaller the approximation error is.

We may call this constructed $LP$ ``the discrete version of Taylor series''. Fig.\ref{Fig.3} is the simplest case of this comparison. In Fig.\ref{Fig.3}, (a) is the approximation to a smooth function $f(x)$ by its differential (first-order Taylor series) and (b) is approximating $f(n)$ by order 2 $LP$, which are very similar to each other.

\begin{figure}[!t]
\captionsetup{justification=centering}
\centering
\subfloat[Approximation to $f(x)$ by Taylor series.]{\includegraphics[width=2.0in]{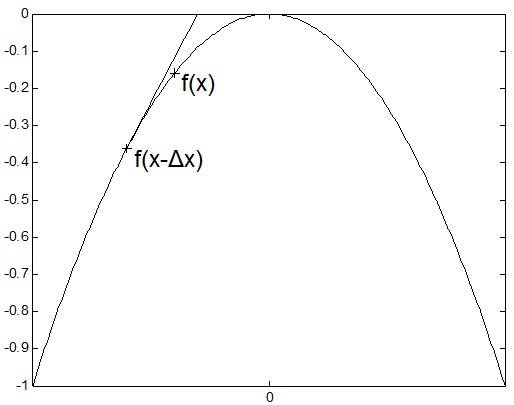}} \quad
\subfloat[Approximation to $f(n)$ by $LP$.]{\includegraphics[width=2.0in]{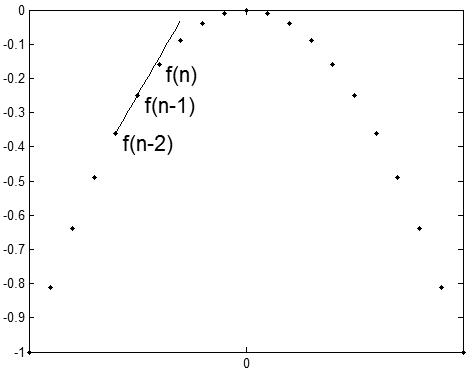}}
\caption{Comparison between Taylor series and $LP$.}
\label{Fig.3}
\end{figure}

\section{Conclusions}
We gave a general interpretation of $LP$ in the interpolation framework as well as several following results that are useful both in engineering and theory. This interpolation framework can help us understand $LP$ from a new viewpoint besides the widely known statistical perspective. We hope that there will be more useful or interesting results discovered underlying this framework.




\end{document}